\documentclass[10pt,journal,compsoc]{IEEEtran}
\usepackage{cite}
\usepackage{algpseudocode}
\ifCLASSINFOpdf
\else
 \fi
\usepackage[cmex10]{amsmath}
\usepackage{graphics} 
\usepackage{epsfig} 
\usepackage{amssymb}  
\usepackage{amsthm}
\usepackage{algpseudocode}
\usepackage[ruled]{algorithm}
\hyphenation{op-tical net-works semi-conduc-tor}

\newcommand{\tb}{\textbf}
\newcommand{\mc}{\mathcal}
\usepackage{multicol}

\begin{document}

\title{Weighted network estimation by the use of topological graph metrics}

\author{Loukianos Spyrou and Javier Escudero\thanks{This work was supported by EPSRC, UK, Grant No. EP/N014421/1}
\thanks{Loukianos Spyrou and Javier Escudero are with the School of Engineering, University of Edinburgh, EH9 3FB, U.K.}
}
\twocolumn

\markboth{}%
{}

\IEEEtitleabstractindextext{%
\begin{abstract}
Topological metrics of graphs provide a natural way to describe the prominent features of various types of networks. Graph metrics describe the structure and interplay of graph edges and have found applications in many scientific fields. In this work,  graph metrics are used in network estimation by developing optimisation methods that incorporate prior knowledge of a network's topology. The derivatives of graph metrics are used in gradient descent schemes for weighted undirected network denoising, network completion , and network decomposition. The successful performance of our methodology is shown in a number of toy examples and real-world datasets. Most notably, our work establishes a new link between graph theory, network science and optimisation. 
\end{abstract}

\begin{IEEEkeywords}
graph metric derivatives, graph theory, network completion, network denoising,  network decomposition, optimisation
\end{IEEEkeywords}}

\maketitle
\IEEEpeerreviewmaketitle
\IEEEraisesectionheading{\section{Introduction}\label{sec:introduction}}

\IEEEPARstart{G}{raph} theory has found applications in many scientific fields in an attempt to analyse interconnections between phenomena, measurements, and systems. It provides a data structure that naturally express those interconnections and also provides a framework for further analysis \cite{bollobas2012graph,gross2005graph}. A graph consists of a set of nodes and edges describing the connections between the nodes. The edges of binary graphs take the values of either 1 or 0 indicating the presence or absence of a connection, while in weighted graphs the edges are described by weights indicating the strength of the connection. Graphs have been extensively used in a variety of applications in network science such as biological networks, brain networks, and social networks \cite{deo2016graph,barabasi2012network,tichy1979social,girvan2002community}.

Graph metrics are functions of a graph's edges and characterize one or several aspects of network connectivity \cite{Barrat2004,Rubinov2010,Laita2011}. Local metrics deal with the relation of specific nodes to the network structure while global metrics describe properties of the whole network. Graph metrics are largely used to describe the functional integration or segregation of a network, quantify the centrality of individual regions, detect community structure, characterize patterns of interconnections, and test resilience of networks to abrupt changes.

Estimation of a network's structure or properties has been performed in a variety of contexts. Although a single definition does not exist, network estimation can include any algorithm or method that detects, enhances, generates, or increases some quality measure of networks. Link prediction and network completion deal with predicting the existence of missing edges based on the observed links in a binary graph \cite{Liben-Nowell2007,Goldberg2003,Lu2010,Kim2011,Hanneke2009}. Also, in \cite{Kim2011} the prediction of missing nodes was attempted. So far such methods have dealt with detecting only whether a link (edge) exists or not, not with the estimation of its weight. Typical applications include predicting the appearance of future connections in social \cite{Liben-Nowell2007,al2006link} or biological \cite{symeonidis2013biological} networks. The network reconstruction problem deals with composing networks that satisfy specific properties \cite{Mastrandrea2014,Squartini2011,Bleakley2007,Filosi2014}. This can be particularly useful when building null models for hypothesis testing. Network inference problem attempts to identify a network structure where the edges have been corrupted by a diffusion process through the network \cite{Gomez-Rodriguez2010,myers2010convexity}. More recently, reducing the noise in social networks has been attempted in \cite{Aghagolzadeh2015,Morris2004,Gao2013}.

In this work, we assume that prior information is available regarding an observed weighted undirected network. This prior information comes in the form of estimates of a number of topological graph metrics of the network. We utilise these estimates in an optimisation framework in order to adjust the weights of the observed network to satisfy those properties. There are many real-world cases where there is knowledge of a network's structure but not exact or reliable network information \cite{Mastrandrea2014}, e.g. strength of connections between banks is known but exact connections are hidden for privacy issues in bank networks \cite{Caldarelli2013}, modularity metrics of brain networks are similar between subjects \cite{Sporns2016}, properties of the constituent networks may be known for mixed networks \cite{Morris2004}. We demonstrate the utility of our methodology in three schemes. 

Firstly, a network denoising scheme, where an observed network is a noisy version of an underlying noise-free network. By considering the error between the observed network's graph metrics and the true network's known metrics, in an iterative gradient descent process on the network's weights, the resulting network is closer to the noise-free one. Using the node degrees as priors has been performed in binary networks in \cite{Morris2004} and in terms of network reconstruction the knowledge of degrees has been employed in weighted networks in \cite{Mastrandrea2014,Huang2009}. In \cite{Aghagolzadeh2015}, the transitivity of a network was used in a network diffusion scheme to change a social network's weights but without a specific stopping point. In \cite{Gao2013}, denoising has been attempted in the context of removing weak ties between users in a social networks. Here, we provide analytical and empirical proofs on the utility of denoising schemes that are based on the optimisation of various graph metrics through gradient descent.

Secondly, we develop a weighted network completion scheme where some of the weights of the network's edges are missing. Similarly to the two previous schemes, we adapt the missing weights such that the whole network obtains specific values for known graph metrics. Weighted network completion has not been performed in the literature per se, only the closely related matrix completion problem \cite{Keshavan2010} and matrix completion on graphs \cite{Kalofolias2014} where the completion is aided by assuming that the rows and columns of the matrix form communities.

Finally, we develop a network decomposition scheme for the cases where an observed network is an additive mixture of two networks. Assuming that graph metrics are known for the two constituent networks we derive an algorithm that can estimate the networks by enforcing them to have specific values for graph metrics while keeping the reconstruction error between the original and estimated mixture small. Network decomposition has been traditionally applied in a different context, on decomposing graphs with disjoint nodes \cite{Nash-Williams1964}. For factored binary graphs untangling has been performed by taking into account the degree distribution of the constituent graphs \cite{Morris2004}. Here, we not only consider the degrees of a network and decomposing it into subgraphs (i.e. multiple graphs with disjoint nodes) but we use multiple graph metrics in additive graph mixtures.

Therefore, in this paper, we provide a comprehensive description of theoretical and empirical results on the use of optimisation of graph metrics for various problems in network estimation. In section \ref{sec:graph_metrics} we give some basic definitions and a brief introduction to graph theory and graph metrics. The network estimation methods are shown in Section \ref{sec:estimation} with the details of the three schemes, denoising \ref{sec:denoising}, completion \ref{sec:completion}, and decomposition \ref{sec:decomposition}. In section \ref{sec:deriv} we derive the graph metrics derivatives that are used in the optimisation methods. In section \ref{sec:results} we apply our methodology to a number of toy examples and real data and in section \ref{sec:discussion} we put the results into context and discuss the utility that our method provides. Section \ref{sec:conclusions} concludes the paper.

\section{Graph metrics} \label{sec:graph_metrics}

A weighted graph $\mathcal{G}= (\mathcal{V}, \mathcal{E}, \bf W)$ is defined by a finite set of nodes (or vertices) $\mc{V}$ with $|\mc{V}|=n$, a set of edges $\mc{E}$ of the form $(v_i,v_j) \in \mc{E}$ with $|\mc{E}|=n^2-n$ and a weighted adjacency matrix $\bf{W}$ with $w_{ii}=0 \,\forall\, i$. In this work, we consider undirected graphs for which $\bf{W}$ is symmetric, i.e. $w_{ij}=w_{ji}$ or $\bf{W}=\bf{W^T}$. The entries $w_{ij}$ in the weighted adjacency matrix $\bf W$ (weight matrix from now on) indicate the strength of connection between nodes. We assume that networks are normalised, i.e. $w_{ij}\in [0,1]$.

Graph metrics are scalar functions of the weight matrix, i.e. $f(\tb{W}): \mathbb{R}^{n^2} \to \mathbb{R}$. Global metrics map the weight matrix into a single value and therefore attempt to simply quantify a specific property of a network. Local metrics on the other hand, quantify some property of the network separately for each node $i \in \{1...n\}$ with $f_i(\tb{W}) \in \mathbb{R}^{n^2} \to \mathbb{R}$, potentially resulting in $n$ functions and $n$ separate values.

Although graph metrics were originally defined on binary (unweighted) networks, the conversion to weighted metrics is usually but not always straightforward \cite{Rubinov2010,Barthelemy2005,Newman2004,Fagiolo2004,Onnela2005,Opsahl2009}. 
The main motivation of this study can be recognised by pointing out that a network can be adjusted by changing the matrix $\tb{W}$ so that it obtains a certain value for some graph measure.
There are numerous graph metrics that describe various features of networks  \cite{Rubinov2010,Laita2011}. The main properties that they measure are:
\begin{itemize}
	\vspace{5pt}
\item Integration (ability of the network to combine information from distributed nodes).
\item Segregation (ability of the network for specialised processing in densely interconnected node groups).
\item Centrality (characteristic that describes the presence of regions responsible for integrating total activity).
\item Resilience (ability of a network to withstand weight changes).
\item Motifs (presence of specific network patterns).
\item Community structure (separation of a network to various functional groups).
\item Small-worldness (highly segregated \emph{and} integrated nodes).
\vspace{5pt}
\end{itemize}

There are graph metrics that contain quantities which are themselves a product of an optimisation procedure (e.g. module structure, shortest path length). These quantities are considered outside of the scope of this work. 

\subsection{Definitions}
 Here we show some definitions, notations and useful matrices that are used in the following sections.
 \\ 
 $\begin{array}{ll}
 \hline \hline \rule{0pt}{4ex}
 n & \text{:\quad number of nodes} \\\\
 m & \text{:\quad number of graph metrics }\\\\
 {\tb{S}_{ij}}=\{1_{ij}\}, \{0_{\neg i \| \neg j}\} & \text{:\quad matrix of zeros except at } (i,j)\\\\
 \{\tb{A}\}_{ij}=a_{ij}=tr\{\tb{A}\tb{S}_{ji}\}  & \text{:\quad $(i,j)_{th}$ element of matrix $\bf{A}$}\\\\
 tr\{\tb{A}\}=\sum\limits_{i=1}^{n} \{\tb{A}\}_{ii}	&\text{:\quad sum of diagonal elements} \\\\
\tb{1}_n & \text{:\quad column vector of ones}\\\\
 \tb{O}_n=\tb{1}_{n}\tb{1}_{n}^T	& \text{:\quad matrix of ones} \\\\
 \tb{H}_n=\tb{O}_n-\tb{I}_n & \text{:\quad all ones except at diagonal} \\\\
 {\tb{R}_j}=\{1_{ij} \forall i\},\{0_{\neg i j \forall i}\} & \text{:\quad matrix with ones at column $j$}\\\\
 \sum\limits_{ij}a_{ij}=tr\{\tb{A}\tb{O}_n\} & \text{:\quad sum of all matrix entries}\\\\
 \tb{A} \circ \tb{B} & \text{:\quad Hadamard (element-wise) product}\\\\
 \hline \rule{0pt}{-2ex}
 \end{array} $

 \section{Network Estimation}\label{sec:estimation}
 
 In this section we formulate the optimisation methodologies for the three schemes considered in this work.

 \subsection{Denoising}\label{sec:denoising}
 Suppose that a weight matrix $\bf W$ of a network is corrupted by additive noise:
 
 \begin{equation}
 \tb{W}_e =\bf W + E
 \end{equation}
 
 The error matrix $\bf E$ can be considered as a network unrelated to the network structure being considered. For example, $\bf E$ could be social calls when trying to detect suspicious calls in social networks \cite{Morris2004}, the effect of volume conduction in EEG based brain network connectivity, or measurement noise. A different type of noise that occurs in networks, the effect of missing values is treated in section \ref{sec:completion}.
 
 If we assume that we have estimates of $M$ differentiable graph metrics of the original $\bf W$, i.e. $f_m({\bf W})=K_m$ where $m\in \{1,... ,M\}$, then we can formulate a cost function that measures the deviation of the observed weight matrix's metrics $f_m(\bf W_e)$ to the estimates $K_m$ as:
 \begin{equation}
 c(\tb{W}_e)=\sum\limits_me_m^2(\tb{W}_e)=\sum\limits_m\left(f_m({\bf W}_e)-K_m\right)^2
 \end{equation}
 The error is minimised with gradient descent updates on $\bf W_e$:
 \begin{equation}\label{denoising_updates}
 {\bf W}_e^{(t+1)}={\bf W}_e^t- \mu \sum\limits_m  e_m({\bf W}_e^t)\frac{df_m({\bf W}_e^t)}{{d\bf W}_e^t}
 \end{equation}
 where $t$ is the iteration index and $\mu$ the learning rate. For the case of $m=1$, Equation (\ref{denoising_updates}) describes the traditional single function gradient descent. For $m>1$, it can be considered an equally weighted sum method of multiobjective optimisation, motivated by the fact that the graph metrics are in the same range due to normalisation. Multiobjective optimisation enables the weighting of different metrics to accommodate priorities on which are more important for a specific task. Such weighting is considered above the scope of this work. The full denoising procedure is described in Algorithm \ref{alg:denoising}. Note that values below 0 and above 1 are truncated to zero and one respectively since we assume that the networks are normalised.
 
 In Appendix A we provide a proof that for convex cost functions $c(\tb{W})$, denoising guarantees error reduction. The implication of that is that when a graph metric results in a convex cost function, such as the degrees ($k_i^w$) of the network, then the optimisation of Algorithm \ref{alg:denoising} with $f(\tb{W})=\frac{1}{n}\sum\limits_i k_i^w$ will always converge to a solution $\hat{\bf W}$ that is closer to the original network $\tb{W}$ than $\tb{W}_e$ is. For non convex metrics, there is no such guarantee. In Section \ref{sec:denoising} we show empirical results on the extent of that effect. In the Appendix C we show the proof that cost functions based on graph metrics such as the degree are convex. In general, linear functions of the weights of the adjacency matrix are convex whereas nonlinear functions of the weights are not.

 \begin{algorithm}
 	\begin{algorithmic}[1]
 		\Statex OUTPUT:  $\hat{\bf W}$\Statex INPUTS: $\bf W_e$, $K_m$
 		\State Initialise $t=0, {\bf W}^0={\bf W}_e$, 
 		\State $E$=$\sum\limits_me_m^2({\bf W}^0)=\sum\limits_m\left(f_m({\bf W}^0)-K_m\right)^2$
 		\While{$E>\epsilon$}  
 		\State ${\bf W}^{(t+1)}={\bf W}^t- \mu \sum\limits_m  e_m({\bf W}^t)\frac{df_m({\bf W}^t)}{{d\bf W}^t}$
 		\State if $w_{ij}<0$ then $w_{ij}=0$, $w_{ij}>1$ then $w_{ij}=1$
 		\State $E$=$\sum\limits_me_m^2({\bf W}^{t+1})=\sum\limits_m\left(f_m({\bf W}^{t+1})-K_m\right)^2$
 		\State $t=t+1$
 		\EndWhile
 		\State Denoised network: ${\bf \hat{W}}={\bf W}^t$
 	\end{algorithmic}
 	\caption{Denoising of corrupted network ${\bf W}_e$ through the use of graph measure estimates $K_m$}\label{alg:denoising}
 \end{algorithm}
 
 \subsection{Completion} \label{sec:completion}
 
 For the case that a set $\mathcal{M}$ of entries of the weight matrix are missing, we can perform matrix completion by:
 \begin{equation}\label{completion_updates}
 {\bf W}_{ic}^{(t+1)}={\bf W}_{ic}^t- \mu \sum\limits_m  e_m({\bf W}_{ic}^t)\left( \frac{df_m({\bf W}_{ic}^t)}{{d\bf W}_{ic}^t}\circ {\bf S_\mathcal{M}}\right)
 \end{equation}
 where ${\bf S}_\mathcal{M}$ is a matrix with ones at the set of missing entries and zeroes everywhere else. Similar to the previous two cases, the assumption is that if the true graph metrics are known, gradient descent will adjust the missing weights close to their true values. The missing entries of the incomplete weight matrix ${\bf W}_{ic}$ can be initialised to the most likely value of the network ($w$). This can be considered as a denoising procedure with the missing weights equal to `noisy' weights of value $w$. In Algorithm \ref{alg:completion} we describe the network completion procedure.
 
 \begin{algorithm}
 	\begin{algorithmic}[1]
 		\Statex OUTPUT:  $\hat{\bf W}_c$
 		\Statex INPUTS: ${\bf W}_{ic}$, $K_m$, set of missing entries $\mathcal{M}$
 		\State Initialise $t=0, \{{{\bf W}_{ic}^0}\}_{ij}=w \text{ with } (i,j) \in \mathcal{M}$, 
 		\State $E$=$\sum\limits_me_m^2({\bf W}_{ic}^0)=\sum\limits_m\left(f_m({\bf W}_{ic}^0)-K_m\right)^2$
 		\While{$E>\epsilon$}  
 		\State ${\bf W}_{ic}^{(t+1)}={\bf W}_{ic}^t- \mu \sum\limits_m  e_m({\bf W}_{ic}^t)\frac{df_m({\bf W}_{ic}^t)}{{d\bf W}_{ic}^t}\circ {\bf S_\mathcal{M}}$
 		\State if $w_{ij}<0$ then $w_{ij}=0$, $w_{ij}>1$ then $w_{ij}=1$
 		\State $E$=$\sum\limits_me_m^2({\bf W}_{ic}^{t+1})=\sum\limits_m\left(f_m({\bf W}_{ic}^{t+1})-K_m\right)^2$
 		\State $t=t+1$
 		\EndWhile
 		\State Complete network estimate: ${\bf \hat{W}}_c={\bf W}_{ic}^t$
 	\end{algorithmic}
 	\caption{Completing the missing entries of network $\tb{W}_{ic}$ through the use of graph measure estimates $K_m$}\label{alg:completion}
 \end{algorithm}
 
 \subsection{Decomposition} \label{sec:decomposition}
 
 Suppose that we observe a mixed network that arises as a combination of two networks:
 \begin{equation}
 \tb{W}_f = \tb{W}_1 + \tb{W}_2
 \end{equation}
 If we have estimates of some topological properties of the two networks, i.e. $\{f_m^1({\bf W}_1)=K^1_m\},\{f_m^2({\bf W}_2)=K^2_m\}$, then we can utilise these information to infer the networks from their mixture. This could be accomplished separately for each network using Algorithm \ref{alg:denoising}. However, since we know the mixture ${\bf W}_f$, we utilise that in the following optimisation problem:
 
 \begin{equation*}
 \begin{aligned}
 & \underset{\tb{W}_1,\tb{W}_2}{\operatorname{argmin}} 
 & & \sum\limits_m \left(f_m^1({\bf W}_1) - K^1_m\right)^2 + \left(f_m^2({\bf W}_2) - K^2_m\right)^2 \\
 & \text{subject to}
 & & ||{\bf W}_f-(\tb{W}_1+\tb{W}_2)||^2_{\text{F}}\leq \xi
 \end{aligned}
 \end{equation*}
 We solve this optimisation problem with alternating minimisation since it is a function of two matrices. We fix one of the two weight matrices and solve the following optimisation problem for the other one in an alternating fashion:
 \begin{equation}\label{eq:deccon1}
 \begin{aligned}
 & \underset{\tb{W}_1}{\operatorname{argmin}} 
 & & \sum\limits_m \left(f_m^1({\bf W}_1) - K^1_m\right)^2 +\lambda||\tb{W}_1-(\tb{W}_f-\tb{W}_2)||^2_{\text{F}}
 \end{aligned}
 \end{equation}
 \begin{equation}\label{eq:deccon2}
 \begin{aligned}
 & \underset{\tb{W}_2}{\operatorname{argmin}} 
 & & \sum\limits_m \left(f_m^2({\bf W}_2) - K^2_m\right)^2 +\lambda|| \tb{W}_2-(\tb{W}_f-\tb{W}_1)||^2_{\text{F}}
 \end{aligned}
 \end{equation}
 where the constraint has been incorporated into the cost function through the penalty parameter $\lambda$. Each separate minimisation, Algorithm \ref{alg:constrained}, resembles Algorithm \ref{alg:denoising} but in this case deviations from $\tb{W}_f-\tb{W}_{1}$ or $\tb{W}_f-\tb{W}_{2}$ are penalised. This whole procedure is shown in Algorithm \ref{alg:alternating}. 
 \begin{algorithm}
 	\begin{algorithmic}[1]
 		\Statex OUTPUT:  $\hat{\bf W}$
 		\Statex INPUTS:  $\tb{Y}$, $K_m$, $\lambda$
 		\State Initialise $t=0$, $\tb{W}^0=\tb{Y}$
 		\State $E$=$\sum\limits_me_m^2({\bf W}^0)=\sum\limits_m\left(f_m({\bf W}^0)-K_m\right)^2$
 		\While{$E>\epsilon$} 
 		\State ${\bf W}^{(t+1)}={\bf W}^t- \mu  \left(\sum\limits_m  e_m({\bf W}^t)\frac{df_m({\bf W}^t)}{{d\bf W}^t}+\lambda (\bf W-Y)||{\bf W-Y}||^2_{\text{F}}\right)$
 		\State if $w_{ij}<0$ then $w_{ij}=0$, $w_{ij}>1$ then $w_{ij}=1$
 		\State $E$=$\sum\limits_me_m^2({\bf W}^{t+1})=\sum\limits_m\left(f_m({\bf W}^{t+1})-K_m\right)^2$		
 		\State $t=t+1$
 		\EndWhile
 		\State Constrained estimate: $\hat{\tb{W}}=\tb{W}^t$
 	\end{algorithmic}
 	\caption{Constrained network optimisation based on graph metrics $K_m$ and a constraint that penalises deviations from a reference network $\tb{Y}$. The penalty is adjustable through the parameter $\lambda$.}\label{alg:constrained}
 \end{algorithm}
 
 \begin{algorithm}
 	\begin{algorithmic}[1]
 		\Statex OUTPUT:  $\hat{\tb{W}_1},\hat{\tb{W}_2}$
 		\Statex INPUTS: $\tb{W}_f$, $K_m^1$, $K_m^2$, $\lambda$
 		\State Initialise $t=0$, 
 		\State $\tb{W}_1^0=\text{Algorithm1$\{\tb{W}_f, K^1_m\}$}$	
 		\State $\tb{W}_2^0=\text{Algorithm1$\{\tb{W}_f, K^2_m\}$}$
 		\State $R=||{\bf W_m} - (\tb{W}_1^0+\tb{W}_2^0)||^2_{\text{F}}$
 		\While{$ R>\epsilon$}  
 		\State $\tb{W}_1^{t+1}=\text{Algorithm3}\{\tb{W}_f-\tb{W}_2^{t},\,\,\,\,\,K_m^1,\lambda\}$
 		\State if $w^1_{ij}<0$ then $w^1_{ij}=0$, $w^1_{ij}>1$ then $w^1_{ij}=1$
 		\State $\tb{W}_2^{t+1}=\text{Algorithm3}\{\tb{W}_f-\tb{W}_1^{t+1},K_m^2,\lambda\}$
 		\State if $w^2_{ij}<0$ then $w^2_{ij}=0$, $w^2_{ij}>1$ then $w^2_{ij}=1$
 		\State $R=||{\bf W_f} - (\tb{W}_1^{t+1}+\tb{W}_2^{t+1})||^2_{\text{F}}$
 		\State $t=t+1$
 		\EndWhile
 		\State Optimised estimates: $\tb{W}_1, \tb{W}_2$
 	\end{algorithmic}
 	\caption{Alternating minimisation procedure for network decomposition of a mixed network $\tb{W}_f$ through known graph metrics for the individual networks $K_m^1,K_m^2$.}\label{alg:alternating}
 \end{algorithm}
 
 The motivation for the procedure in Algorithm \ref{alg:alternating} is the following. Consider estimating the two weight matrices only by using the denoising algorithm for each one separately. The estimate of ${\tb{W}}_1$ can be written as $\hat{\tb{W}}_1=\tb{W}_1+\tb{E}_1$ where $\tb{E}_1$ indicates the error from the true weight matrix $\tb{W}_1$. Similarly for $\tb{W}_2$ resulting in $\hat{\tb{W}}_f=\tb{W}_1+\tb{W}_2+\tb{E}_1+\tb{E}_2=\tb{W}_f+\tb{E}$. Therefore, it is evident that the estimates of the weight matrices are not ideal whenever $||\tb{E}||_{F}>0$. Note that if $||\tb{E}||_{F}=0$ it does not necessarily imply that the estimates of the weight matrices are optimal since it is possible that $\tb{E}_1=-\tb{E}_2$. It would be optimal if we could constraint the estimate e.g. $\hat{\tb{W}}_1$ as:
 
 \begin{equation*}
 \begin{aligned}
 & \underset{\tb{W}_1}{\operatorname{argmin}} 
 & & \sum\limits_m \left(f_m^1({\bf W}_1) - K^1_m\right)^2 \\
 & \text{subject to}
 & & ||{\bf E}_1||^2_{\text{F}}=0
 \end{aligned}
 \end{equation*}
 However that would require knowledge of the true weight matrix $\tb{W}_1$. Instead, note that: \\ $\hat{\tb{W}}_1-(\tb{W}_f-\hat{\tb{W}}_2)=\tb{W}_1+\tb{E}_1-(\tb{W}_1+\tb{W}_2-\tb{W}_2-\tb{E}_2)
 =\tb{E}_1+\tb{E}_2$. Therefore by reducing $||\tb{W}_1-(\tb{W}_f-\tb{W}_2)||_{F}$ in Eq. (\ref{eq:deccon1}), we are reducing the total error. In other words we are solving the following constrained optimisation problem:
 \begin{equation}
 \begin{aligned}
 & \underset{\tb{W}_1}{\operatorname{argmin}} 
 & & \sum\limits_m \left(f_m^1({\bf W}_1) - K^1_m\right)^2 \\
 & \text{subject to}
 & & ||{\bf E}_1 + \tb{E}_2||^2_{\text{F}}\leq \xi
 \end{aligned}\label{eq:explanation}
 \end{equation}
 The inequality is incorporated such that $||\tb{E}_2||_{F}\geq0$ when optimising $\tb{W}_1$ and vice versa. Consider the extreme cases. Firstly, when $\xi=0$. That would imply that $\lambda=\infty$ which would render the graph measure optimisation ineffective. On the other hand, a large $\xi$ implies small $\lambda$ which would render the constraint ineffective. In our implementation we adjust the $\lambda$ parameter such that the reconstruction error, i.e. $\tb{W}_f-\hat{\tb{W}}_f$ slowly decreases over the iterations of the alternating minimisation algorithm. Note the well known fact that there is a one-to-one correspondence between $\xi$ in Eq. \ref{eq:explanation} and $\lambda$ in Eq. \ref{eq:deccon1}.

\subsection{Derivatives of graph metrics} \label{sec:deriv}
 
In this section we derive the expressions for the derivatives of popular graph metrics that describe some important properties of networks. We deal with: degree, average neighbour degree, transitivity, clustering coefficient, modularity. More details can be found in Appendix B.
\\
\subsubsection{Degree}
The degree of a node $i$ describes the connection strength of that node to all other nodes: 
\begin{equation}
k_i^w=\sum\limits_{j}w_{ij}=tr\{\tb{W} \tb{R}_i\}
\end{equation}
with the degree derivative being:
\begin{equation}
\frac{\partial k_i^w}{\partial {\bf W}}={\bf R}_i^T
\end{equation}
Since $\tb{R}_i$ is non-zero only for column $i$ it can be computed efficiently as:
\begin{equation}
\frac{\partial k_i^w}{\partial {\bf w}_i}={\tb{1}_n}^T
\end{equation}
where $\tb{w}_i$ is the $i_{th}$ column of $\tb{W}$.
\\
\subsubsection{Average neighbour degree - resilience}
The average neighbour degree for node $i$ is given by:
\begin{equation}
ND_i=\frac{\sum\limits_j w_{ij}k_j^w}{k_i^w}=\frac{tr\{\tb{W}^2\tb{R}_i\}}{tr\{\tb{W}\tb{R}_i\}}=\frac{\rho}{\tau}
\end{equation}
The derivative of the average neighbour degree is:
\begin{equation}
\frac{\partial ND_i}{\partial {\tb{W}}}=\frac{\tau(\tb{W}\tb{R}_i+\tb{R}_i\tb{W})^T-\rho \tb{R}_i^T}{\tau^2}
\end{equation}
\\
\subsubsection{Transitivity - segregation}

The transitivity is a global measure of the segregation of a network and here we defined it as (see also \cite{Aghagolzadeh2015}):
\begin{equation}\label{trans}
T=\frac{\sum\limits_{ijh}w_{ij} w_{ih}w_{jh}}{\sum\limits_{ij}\sum\limits_{h}w_{ih}w_{jh}}=\frac{tr \bf \{W^3\}}{ tr  \{\tb{W}\tb{H}_n\tb{W}\}}=\frac{\alpha}{\beta}
\end{equation}

The transitivity derivative is:
\begin{equation}
\frac{\partial T}{\partial \bf W} = \left(\frac{3\beta \tb{W}^2-\alpha(\tb{W}\tb{H}_n+\tb{H}_n\tb{W})}{\beta^2}\right)
\end{equation}
\\\\
\subsubsection{Clustering coefficient - segregation}
The clustering coefficient for node $i$ is a local measure of the clustering of a network. It is defined as:
\\
\begin{equation}\label{clusti}
C_i=\frac{\sum\limits_{jh}w_{ij}w_{ih}w_{jh}}{\sum\limits_{jh} w_{ij}w_{ih}}=\frac{\{\tb{W}^3\}_{ii}}{\{{\bf WH_nW}\}_{ii}}=\frac{tr\{\tb{S}_{ii} \tb{W}^3\}}{tr\{\tb{S}_{ii} \tb{W}\tb{H}_n\tb{W}\}}=\frac{\gamma_i}{\zeta_i}
\end{equation}
\\
The derivative of the clustering coefficient is:
\\
\begin{align*}
&\frac{\partial C_i}{\partial \bf W}  = \\ &  \left( \frac{3\zeta_i  \sum\limits_r^{2}\left(\tb{W}^r \tb{S}_{ii} \tb{W}^{2-r}\right)-\gamma_i( \tb{S}_{ii}^T \tb{W}^T \tb{H}^T+\tb{H}^T\tb{W}^T\tb{S}^T_{ii})}{\zeta_i^2}\right)
\end{align*}
\\
\subsubsection{Modularity - community structure}
Modularity metrics the tendency of a network to be divided into modules. Here we deal with optimising the modularity in terms of the network weights, not in terms grouping nodes into modules. Modularity can be written as (see also \cite{Chang2013}):
\\
\begin{equation}
M=\frac{1}{l^w}\sum\limits_{ij}\left(w_{ij}-\frac{k_i^w k_j^w}{l^w}\right)\delta_{ij}
\end{equation}
\\
where $\delta_{ij}=1$ whenever nodes $i$ and $j$ belong to the same module and zero otherwise. 

The modularity derivative is expressed as:
\begin{equation}
\frac{\partial M}{\partial {\bf W}}=\frac{\partial m_1}{\partial \bf W}-\frac{\partial m_2}{\partial \bf W}
\end{equation} 
with:
\\
\begin{equation}
\frac{\partial m_1}{\partial \bf W}=\frac{ l^w{\bf\Delta}-{\theta \bf O_n^T} }{(l^w)^2}
\end{equation}
\\
and:
\\
\begin{equation}
\frac{\partial m_2}{\partial \bf W}=\sum\limits_{r=1}^n\frac{l^w({\bf C_r W \Delta^T+ C_r^T W \Delta})-2 \xi_r {\bf O_n^T}}{(l^w)^3}
\end{equation} 
\\
where $\bf{C_r}$ is a circular shift matrix that shifts down the rows of the matrix on the right by $r-1$ and $\xi_r=\bf{W^T}\bf{C}_r\bf{W}\Delta^T $. See Appendix B.B for more details.

\subsubsection{Local and global metrics}
Any graph measure $f_i$ that operates locally on node $i$ can be cast into its global (full network) form by evaluating the gradient as the average of the nodes' derivatives:
\begin{equation}
\frac{\partial f}{\partial {\bf W}}=\frac{1}{n}\sum\limits_i \frac{\partial f_i}{\partial {\bf W}}
\end{equation}
\section{Results}\label{sec:results}

\subsection{Denoising}

\subsubsection{Synthetic Networks}

In this section we show results of applying the denoising algorithm for various cases. 
We create synthetic undirected networks of three types: 
\begin{itemize}
	\item Random complete network , $w_{ij} \sim \mc{U}[0,1] $
	\item Scale free weighted network where the degrees are distributed with the power law and the non zero edges are given weights $w_{ij} \sim \mc{U}[0,1]$. The network was created according to \cite{Prettejohn2011} with average degree of 5.
	\item A modular network where the weights exhibit community structure in a number of modules. The network was created with the BCT toolbox \cite{Rubinov2010}. The network consists of 8 modules and $90\%$ of the non-zero weights in the modules. 
\end{itemize}  
For each case we add a noise matrix where each entry of $\tb{E}$, $e_{ij} \sim \mc{N}(0,1)$ is normally distributed with mean 0 and standard deviation 1: 

\begin{equation}
\tb{W}_e=\tb{W}+\sigma \tb{E}
\end{equation}

Weights of $\tb{W}_e$ that go below 0 are set to zero, and subsequently the weight matrix is normalised by dividing by its maximum value. In that way we guarantee that all elements of $\tb{W}_e$ are between 0 and 1.
%

Firstly, we show the error reduction of the scheme for various noise levels and networks of 128 nodes. We define error reduction as the ratio of the error of the denoised network $\hat{\tb{W}}$ to the error of the noisy network $\tb{W}_e$:
\begin{equation}\label{eq:erred}
er=1-\frac{||\hat{\tb{W}}-\tb{W}||_{\text{F}}}{||\tb{W}_e-\tb{W}||_{\text{F}}}
\end{equation}
Error reduction is measure in the domain $(-\infty, 1]$ where $1$ indicates perfect denoising. Negative values indicate larger error after applying the denoising algorithm.

In Figures \ref{fig:denrandomvn}, \ref{fig:denscalevn} and \ref{fig:denmodvn} we show the error reduction for an increasing noise level and different graph metrics for the random, scale-free and modular network respectively. Each noise level considers the average of 50 noise matrix realisations. Note that even though the error reduction increases as the noise increases, in absolute terms the error always increases.

\begin{figure}[htbp]
	\centering    
	\includegraphics[width=8cm]{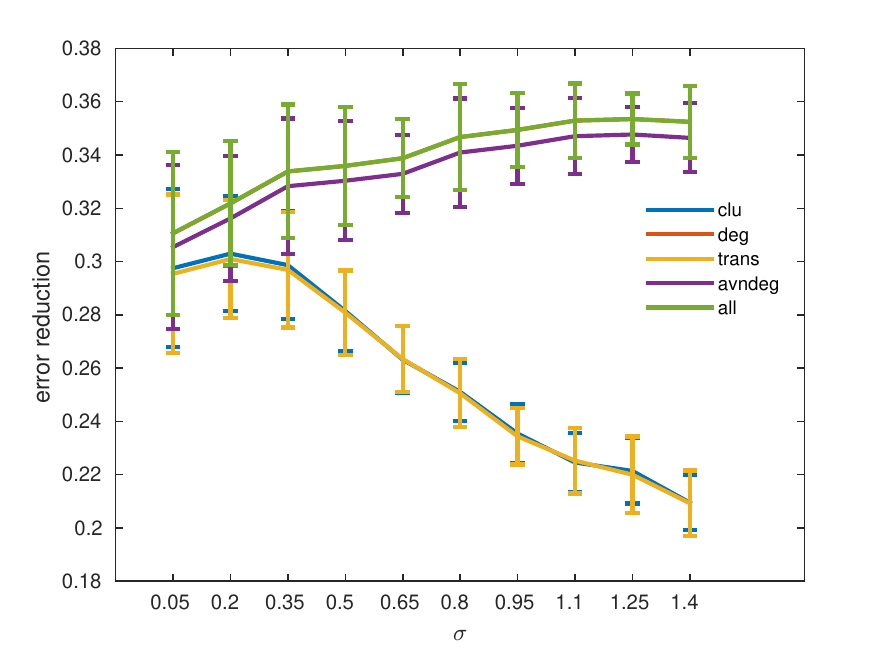}
	\caption{Denoising of a random complete network with 128 nodes. We show the error reduction for an increasing noise level and various graph metrics. Note that the degree coincides with the combination of graph metrics in this figure. This indicates that for the cases that the degree is known no other network metrics are necessary.}
	\label{fig:denrandomvn}c
\end{figure}

\begin{figure}[htbp]
	\centering    
	\includegraphics[width=8cm]{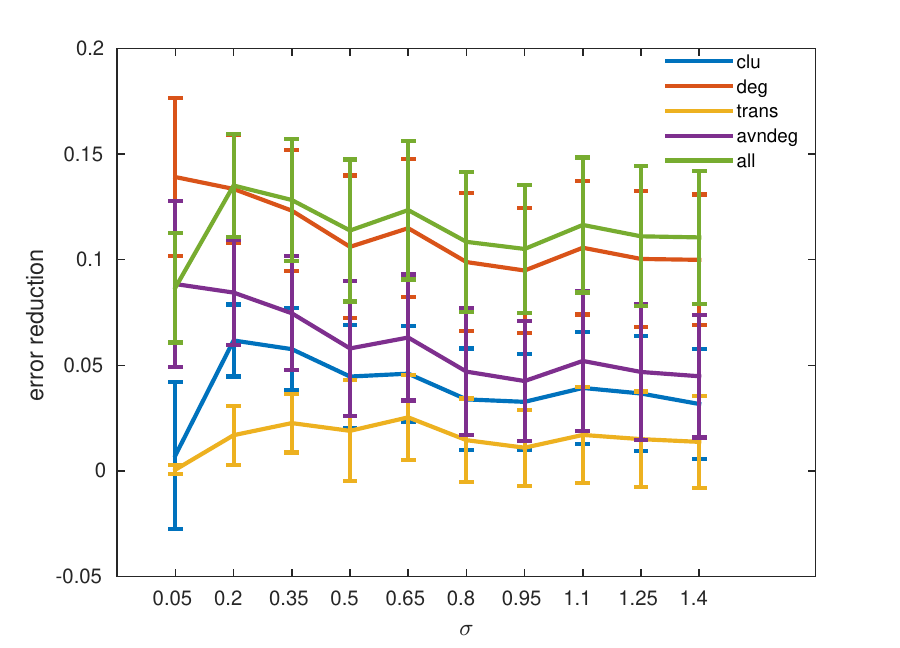}
	\caption{Denoising of a scale-free network with 128 nodes. We show the error reduction for an increasing noise level and various graph metrics. Clustering based measures suffer in this type of network due to the decreased magnitude of those measures in scale-free networks.}
	\label{fig:denscalevn}
\end{figure}

\begin{figure}[htbp]
	\centering    
	\includegraphics[width=8cm]{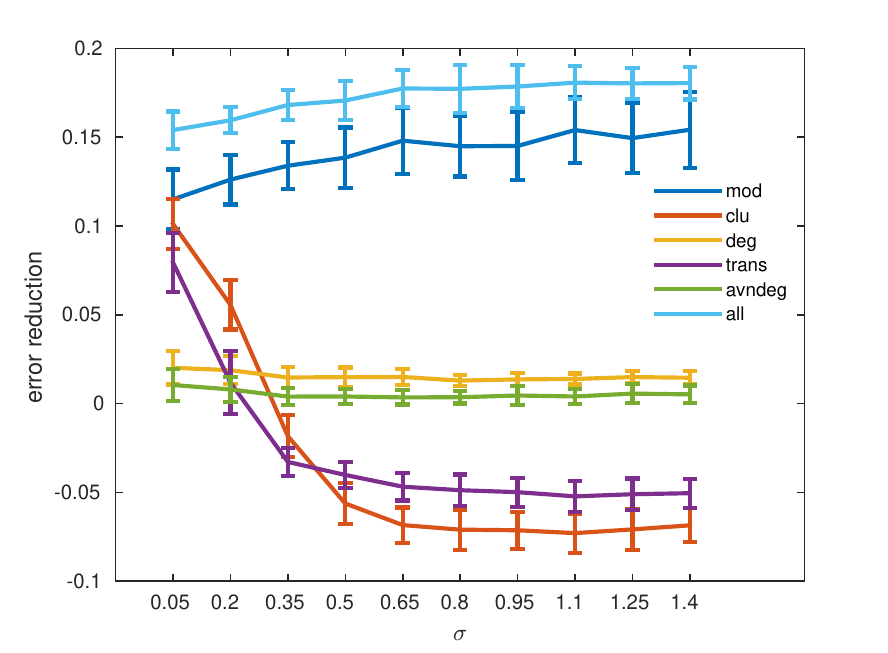}
	\caption{Denoising of a modular network with 128 nodes and 8 modules and $90\%$ of the weights in the modules. We show the error reduction for an increasing noise level and various graph metrics. Note that for the transitivity and clustering coefficient the estimated network had a larger error than without applying the denoising algorithm. This can be explained by noting that the weights are clustered in modules whereas the denoising algorithm is free to adapt any weight.}
	\label{fig:denmodvn}
\end{figure}

Next, we show the error reduction in terms of the number of nodes in the network and a noise level of $\sigma=.5$, see Figure \ref{fig:dennodes}.

\begin{figure}[htbp]
	\centering    

	\includegraphics[width=8cm]{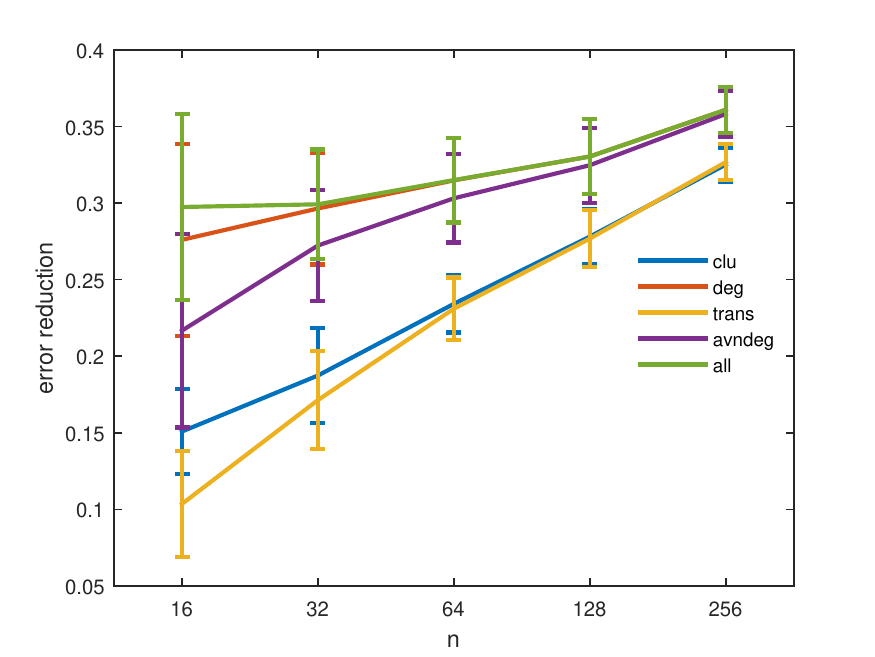}
	\caption{Effect of the number of nodes on the error reduction and a constant $\sigma=.5$. Although the error reduction increases with the number of nodes there is an increase in absolute error as well.}
	\label{fig:dennodes}
\end{figure}

%
%

\subsubsection{Real EEG data}

The denoising algorithm was applied on two electroencephalography (EEG) datasets on a memory task from Alzheimer's patients and control subjects \cite{Pietto2016}. In dataset-1, there were 128-channel recordings from 13 patients with mild cognitive impairment (MCI) and 19 control subjects while for dataset-2 there were 64-channel recordings from 10 patients with familial Alzheimer's disease (FAD) and 10 control subjects. For both datasets we selected the common subset of 28 electrodes that have the exact same locations on the scalp. For each subject we split the recording into 1 second epochs and computed the connectivity network matrix of the first 50 epochs and of each epoch separately. We used the imaginary part of coherence as connectivity metric and considered the alpha (8-12Hz) spectral band. 

We tested two settings. Firstly, a train-test scheme where connectivity matrices are obtained for the two datasets separately and the weight matrices of the test set (dataset-1, MCI) are denoised according to the graph metrics of the training dataset (dataset-2).  Secondly, a within-subject denoising setting where the connectivity matrix of the first 50 epochs of a subject (from dataset-2, FAD) is considered as the training weight matrix $\tb{W}_{tr}$, and each of the other matrices is a test matrix $\tb{W}_i$ to be denoised. Here, we used the transitivity and degree as graph metrics combining both the global and local properties of brain activity.

In Table I we show the results of the denoising algorithm when using as target values the graph metrics of a) the same subject, b) the same group (patient/control), c) the opposite group. The mean square error (MSE) is computed against the training network $\tb{W}_{tr}$. Differences between subjects were significant under a unpaired ttest, $p<0.01$, for all pairwise comparisons except between the same group and opposite group. In Figure \ref{fig:den2afava} we show an example of denoised networks for one trial of a subject. 
\begin{table}\caption{Denoising of noisy EEG networks of 28 channels. For each of the 32 test subjects (19 controls - 13 patients from dataset-1) we denoised each trial of the subject's weight matrix based on the graph metrics of the mean weight matrices of the two groups from the training set (dataset-1) and its own weight matrix. We show the average MSE over all subjects for the three denoising schemes and the original noisy network. Error is computed is against $\tb{W}_{tr}$.}
\small{$
\begin{array}{cccc}\hline \vspace{2pt}
\text{Same Group}& \text{Opposite Group}  & \text{Self}  & \text{Noisy original} \\ \hline 
5.31 \pm 0.79  & 5.33 \pm 0.88  & 4.51\pm0.66  & 6.26\pm1.21
\vspace{2pt} \end{array} $}
\end{table}

\begin{figure}[htbp]
	\centering    
	\includegraphics[width=9cm]{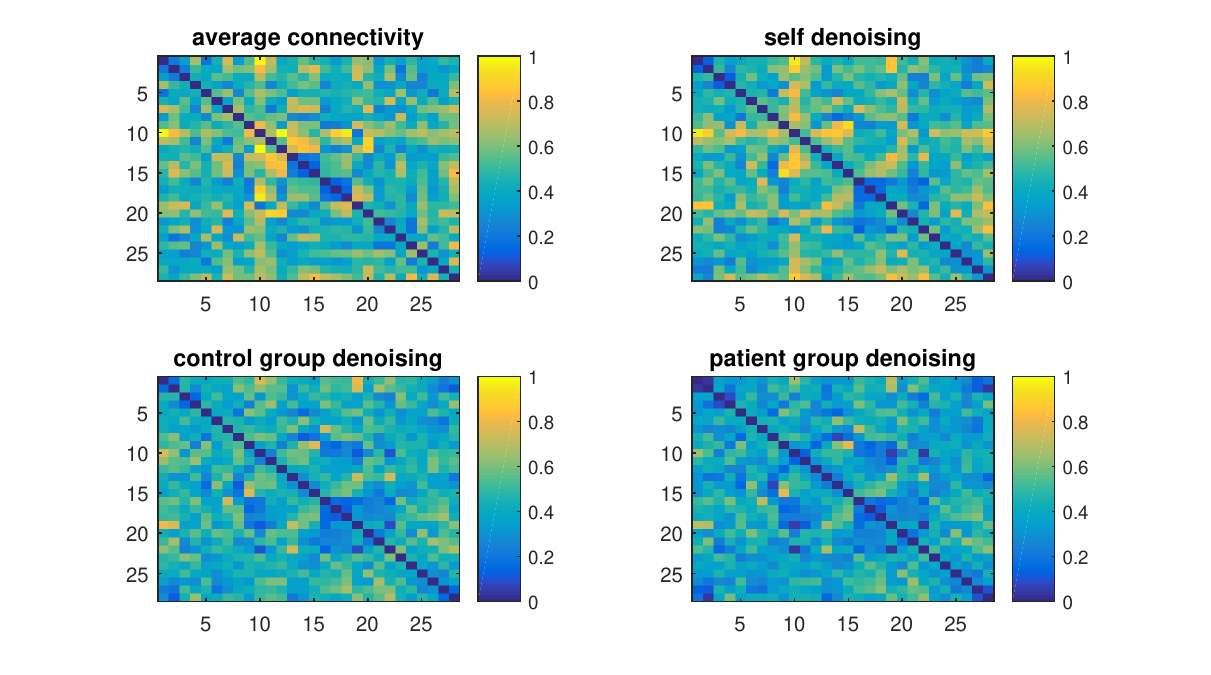}
	\caption{Example of obtained denoised networks for one a trial of a single subject. The network denoised based on the subject specific graph metrics retained the basic characteristics of the true network.}
	\label{fig:den2afava}
\end{figure}

\subsubsection{Macaque connectivity}

In order to further demonstrate the potential applicability of the algorithm, we obtained the connectivity matrices from the brains of two Macaques \cite{nr} with similar values for the transitivity metric. In Figure \ref{fig:den2maca} we show the result of the denoising where the first Macaque's (M1) brain was used to estimate the transitivity and drive the denoising for the second Macaque (M2).

\begin{figure}[htbp]
	\centering    
	\includegraphics[width=9cm]{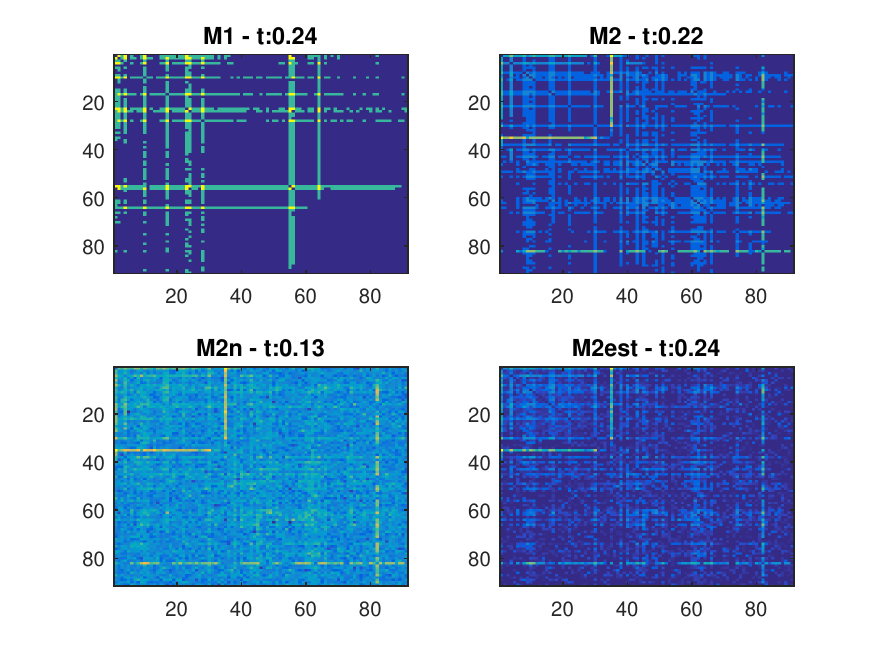}
	\caption{Example of obtained denoised networks when using M1's data to estimate the transitivity and apply that to the network of M2. It is observed that the denoised network is close to the true network even though the transitivity was set to that of M1.}
	\label{fig:den2maca}
\end{figure}

\subsection{Completion}

\subsubsection{Synthetic Networks}

Here we show the results of the network completion Algorithm \ref{alg:completion} for an increasing number of missing entries and number of nodes. Each separate case considers the average of 50 noise matrix realisations.Here we deal with a random network with $w_{ij} \sim \mc{U}[0,1]$. For each case, we optimise three graph metrics (transitivity, degree, clustering coefficient) and show the error reduction of the completion procedure. The missing values are initialised to 0.5 since this is the mean of the uniform $\mc{U}[0,1]$ distribution. This initialisation produces the smallest distance to the true network from all possible initialisations. Error reduction is calculated the same way as in Eq. \ref{eq:erred} with $\tb{W}_e$ being the initialised network.
Results are shown in Figure \ref{fig:comp1}.
\begin{figure}[htbp]
	\centering    
	\includegraphics[width=8cm]{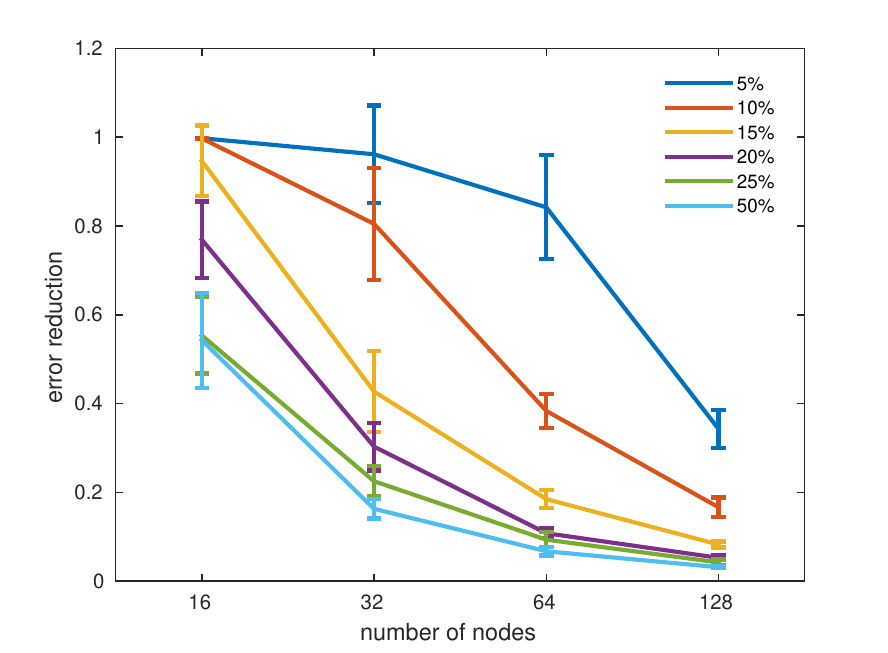}
	\caption{Error reduction of the completion procedure for different number of nodes and percentage of missing entries. We use here three graph metrics (degree, transitivity, clustering).}
	\label{fig:comp1}
\end{figure}

\subsubsection{Real Networks}

\paragraph{Known metrics} We applied the completion algorithm on the following datasets. 1) USAir: a 330 node network of US air transportation, where the weight of a link is the frequency of flights between two airports \cite{pajek}. 2) Baywet: a 128 node network which contains the carbon exchanges in the cypress wetlands of south Florida during the wet season \cite{koblenz}. The weights indicate the feeding levels between taxons. 3) Celegans: the neural network of the  worm C. elegans. Nodes indicate neurons and the edges are weighted by the number of synapses between the neurons \cite{Watts1998}. In Figure \ref{fig:compreal} we show the results of the completion algorithm \ref{alg:completion} for all three networks and different graph metrics.
\begin{figure*}[htbp]
	\centering    
	\includegraphics[width=18cm,height=6cm]{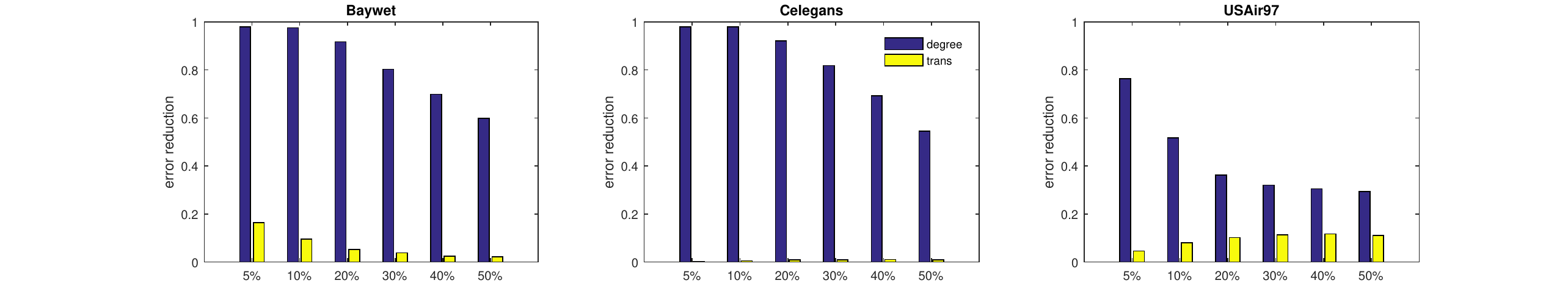}
	\caption{Error reduction of the completion procedure for the three networks in terms of the percentage of missing entries. The estimated network was closer to the true network for both global (transitivity) and local (degree) metrics.}
	\label{fig:compreal}
\end{figure*}
\paragraph{Uncertain metrics}
In this section we demonstrate the performance when the network metrics come from different datasets. This showcases a realistic scenario when a similar type of network is utilised in order to complete the missing values. In Figure \ref{fig:compunc} we show the results of the completion (a) on the Baywet dataset with the metric obtained in the Baydry dataset which contains the carbon exchanges in the dry season \cite{nr} and (b) on two enzyme network with the metrics estimated from the g355 enzyme and the completion performed on the g504 enzyme \cite{nr}.
\begin{figure}[htbp]
	\centering    
	\includegraphics[width=9cm,height=6cm]{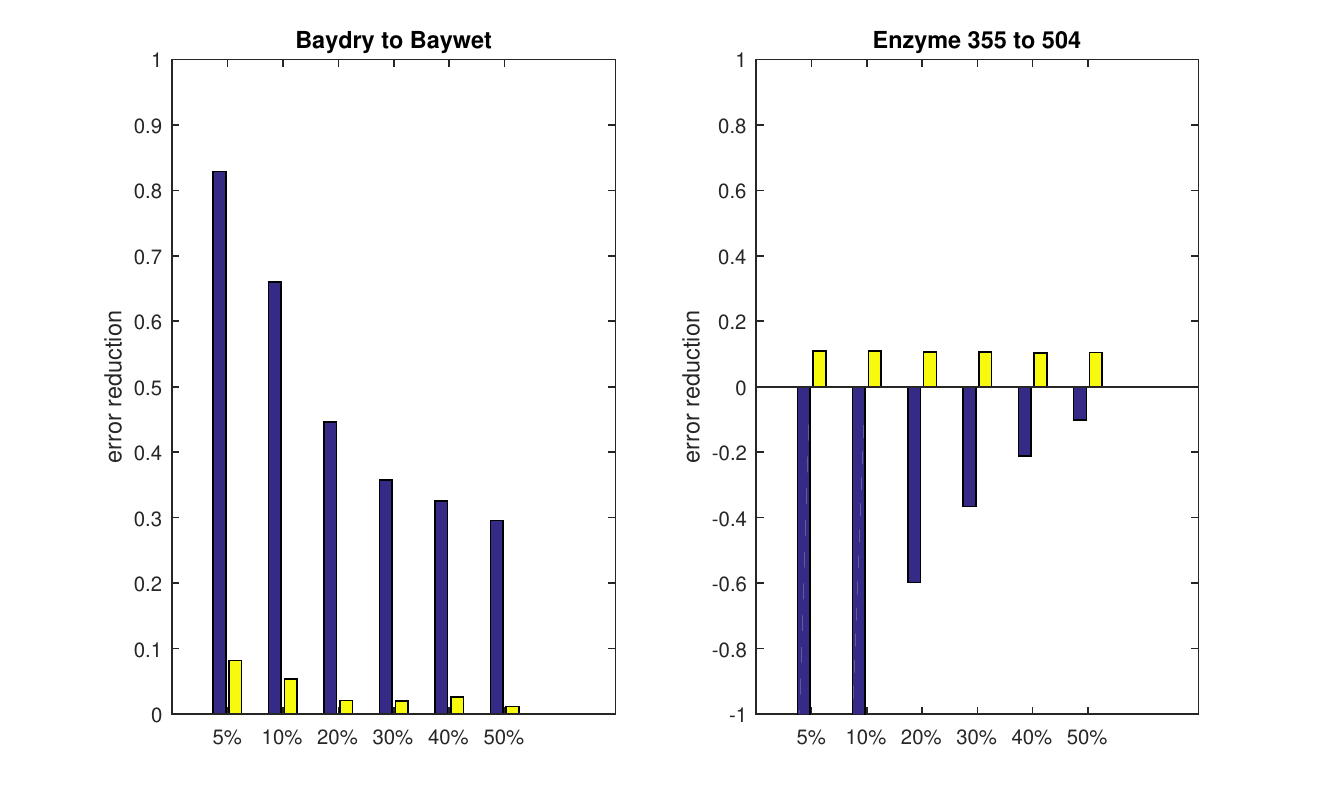}
	\caption{Error reduction of the completion procedure when using network metrics from different sources. For the Baywet dataset, the estimated network was closer to the true network for both global (transitivity) and local (degree) metrics. On the other hand, for the enzyme networks only the global metric resulted in a decrease in the error due to the large individual differences.}
	\label{fig:compunc}
\end{figure}
\subsection{Decomposition}
\subsubsection{Synthetic Networks}
In this section we show example results of the decomposition scheme by considering by mixing (as their sum) a modular and scale-free network. The modular network consists of 8 modules. For the modular network we use the modularity and  for the scale free network we use the transitivity as graph metrics to optimise. In Table II we show the average error reduction of the two networks of Algorithm \ref{alg:alternating} as compared with only denoising the two networks separately. In this cases error reduction for a single network is defined as:
\begin{equation}\label{eq:erred}
er=1-\frac{||{\tb{W}_{dec}}-\tb{W}||_{\text{F}}}{||\tb{W}_{den}-\tb{W}||_{\text{F}}}
\end{equation} 
For all cases the penalty parameter $\lambda$ is adjusted such that the reconstruction error is reduced w.r.t. iterations of the alternating minimisation procedure. 

\begin{table}\begin{center}\label{tab:decfurther2}\caption{Further error reduction of Algorithm \ref{alg:alternating} compared to only denoising the separate networks. In this case we have mixed a modular with a scale-free network. With this figure we illustrate that the optimisation problem of Eq. \ref{eq:deccon1} results in reducing the error between the estimates and the true networks as would happen from problem in Eq. \ref{eq:explanation}.}
	\small{$
	\begin{array}{c|cccc}\hline 
\text{Nodes} & \text{16}& \text{32}  & \text{64}  & \text{128} \\  \hline 
	\text{Reduction} & 0.35\pm 0.07  & 0.30\pm 0.24  & 0.24\pm0.30  & 0.22\pm0.22
	\vspace{2pt} \end{array} $}
\end{center}\end{table}
\subsubsection{Airline data}
 The dataset is a binary network that contains the networks for 37 airlines and 1000 airports \cite{cardillo2012}. We converted each airline's  binary network to a weighted network by adjusting any existing edge between two airports to the total number of edges for all airlines. The mixed network consists of two networks of different airlines mixed together (Lufthansa, Ryanair) for a subset of 50 nodes (airports). In Figure \ref{fig:dec2airplane} we show the decomposition result by using only global graph metrics (transitivity and global clustering coefficient). 

\begin{figure}[!htbp]
  
	\includegraphics[width=9cm,height=10cm]{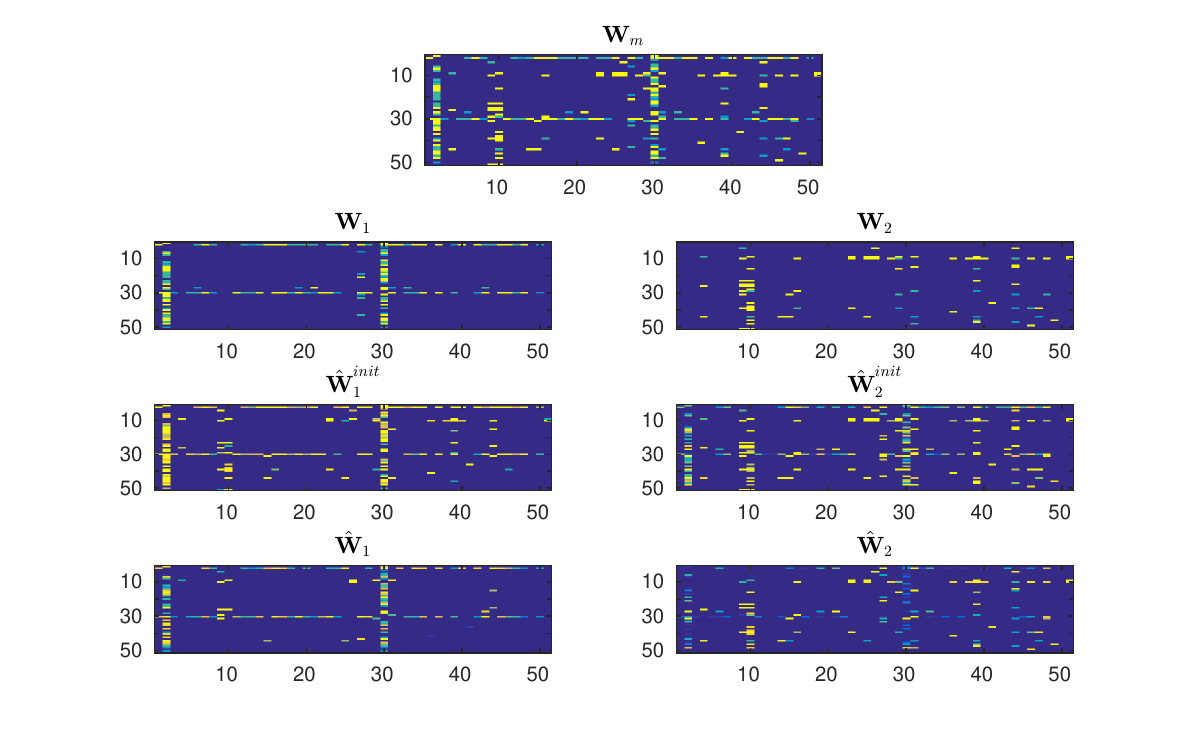}
	\caption{Decomposition of a mixed airline network $\tb{W}_f$ of 50 nodes into estimates $\hat{\tb{W}}_1$ and $\hat{\tb{W}}_2$. The true networks are shown as $\tb{W}_1$ and $\tb{W}_2$. The initial estimates of the denoising algorithm for the two networks are shown as $\tb{W}_1^{init}$ and $\tb{W}_2^{init}$. In this case we have used only global metrics (transitivity and clustering coefficient).}
	\label{fig:dec2airplane}
\end{figure}

\section{Discussion}\label{sec:discussion}

The optimisation schemes described in this work enable the adjustment of a network's weight matrix to fulfil specific properties. The utility of the denoising scheme (Section \ref{sec:denoising}) was evaluated on a number of cases including real-world data. It has to be pointed out that for convex graph metrics, the denoising scheme is guaranteed to converge to a network that is closer to the true underlying network than the noisy observed network. In Figure \ref{fig:denrandomvn} it is observed that considering the degrees of the nodes overshadows any other global metric's performance. For non convex graph metrics there is no guarantee but as shown in Figures \ref{fig:denrandomvn} and \ref{fig:denscalevn}, the estimated network is a better estimate of the underlying network than the original noisy version, even for increasing noise and different network types. The only exceptions to this can be observed for the clustering metrics (transitivity, clustering coefficient) of the modular network in Figure \ref{fig:denmodvn}. This can be explained by noting that in our example most of the weights ($90\%$) are clustered in modules while the denoising algorithm operates on all the weights. Constraining the weight updates only in the modules alleviates that problem. The utility of this scheme is also displayed in Figure \ref{fig:dennodes} where an increase in the number of nodes does not affect the performance. We point out that although the error reduction increases as the number of nodes increases, the error actually increases in absolute terms. 

We also tested the efficacy of the scheme in a real world EEG connectivity dataset of Alzheimers patients and control subjects. When performed in a within subject fashion, and splitting each subject's data into a train set to obtain estimates of the graph metrics, and a test set to apply the denoising algorithm, we successfully reduced the variability of the network regarding background EEG activity. More importantly, though, in a leave-subject-out procedure; using other subject's graph metrics as prior knowledge the algorithm was able to decrease the noise of the network, albeit not as much as in the within subject paradigm as expected. This has important implications in the EEG and related fields (e.g. BCI, fMRI, DTI) where subject independent paradigms are necessary to obtain practically usable and consistent performance \cite{Spyrou2015a}. This is supported by the Macaque example where the transitivity metric of one Macaque was used to drive the denoising procedure for the connectivity matrix of another Macaque, see Figure \ref{fig:den2maca}. Furthermore, such an approach can be useful in any application that can obtain prior knowledge of the structure of the network under consideration. Our work extends prior work in the network denoising field \cite{Gao2013,Morris2004,Aghagolzadeh2015} by providing theoretical proofs and empirical evidence that it is viable for a variety of network types. We also provide the strong proof that the degree of a network, resulting in a convex cost function, is very important in network estimation.

Weighted network completion has not been attempted in the literature and here we employed graph metrics as the driving force behind estimating the missing weights. For modest sizes of missing entries we showed that there is significant benefit of using the completion Algorithm \ref{alg:completion} up to 128 node networks and using three graph metrics. Similarly, for real networks, we show that the knowledge of graph metrics can aid in completing the network. More importantly, there is a benefit from the knowledge of global metrics as seen in Figure \ref{fig:compreal}. For the cases that the metrics are not completely known we show the efficacy of the completion algorithm in two cases where we use similar datasets to drive the completion procedure, see Figure \ref{fig:compunc}.

When utilising graph metrics in the network decomposition scheme (Section \ref{sec:decomposition}) the first observation is that the error reduction is greater by using information from both networks that combine to produce the mixed network. As shown in Table II the resulting estimates from the alternating minimisation procedure of Algorithm \ref{alg:alternating} confirm the intuition behind that methodology. It has to be noted that this is expected since more information is used as compared to only denoising. Namely, that the constraints employed in the optimisation problems of (\ref{eq:deccon1}) and (\ref{eq:deccon2}) can result in the behaviour of optimisation problem (\ref{eq:explanation}). Also in Table II, we show the error reduction as a function of the number of nodes. The choice of global graph metrics, modularity and transitivity, further demonstrates the utility of the methodology in setting where local information would be difficult to obtain.

The decomposition algorithm was tested on real data on mixed airline networks for both global and local only metrics. Assuming prior information on the structure of the networks, the algorithm was able to produce reliable estimates for the underlying networks, see Figure \ref{fig:dec2airplane}. Network decomposition for mixed networks has been rarely attempted in the literature. Our work provides a new formulation that takes account prior knowledge and the use of the reconstruction error in the estimation process. This is in contrast with \cite{Nash-Williams1964,Morris2004} where only factor graphs are considered. 

The choice of which graph measure should be used depends on the application and on which may be available. Local metrics assume knowledge of individual nodes' properties as is typical for e.g EEG applications (Figure \ref{fig:den2afava}) but may not be the case for the airport data (Figure \ref{fig:dec2airplane}). There are two limitations of this study that will be addressed in future work. Computational complexity and scaling the efficacy to large networks. Although many real world networks (e.g. EEG, Social Networks, Weather Networks, Airline Networks) are of the size that we consider in this work ($100s$ of nodes) there exist networks that consist of very large number of nodes ($>>1000s$ of nodes). The calculation of the derivatives increases in the order of $\mathcal{O}(n^2)$ making the computation slow and inefficient. As also discussed in \cite{shuman2013}, this is an open issue in graph based metrics. For large computational operations on graphs, polynomial approximations have been proposed \cite{hammond2011}. Similarly, as a network increases in size, the number of graph metrics should be increased in order to produce comparable performance to the medium sized networks considered here. This is because the number of unknown parameters to estimate increases therefore more graph metrics are necessary. In order to complement the popular graph metrics that we used in this work, we are considering other potential candidates such as ones based on graph spectral methods, and statistical graph analysis. 

\section{Conclusions}\label{sec:conclusions}

In this work we developed three mathematical optimisation frameworks that were utilised in network denoising, decomposition and completion. The basis of the methodology lies in adjusting a network's weights to conform with known graph measure estimates. We derived expressions for the derivatives of popular graph metrics and designed algorithms that use those derivatives in gradient descent schemes. We tested our proposed methods in toy examples as well as real world datasets.

The work performed here has the following implications for network estimation. Firstly, we showed that the use of graph metrics for network denoising reliably reduces the noise in an observed network for both convex and non-convex graph metrics. Also, by combining multiple graph metrics, further reduction is ensued. Depending on the type of network, some metrics may be more appropriate than others. Modularity works well for modular networks while degree seems to perform well for both random and scale free networks. For network decomposition, the use of global information as prior knowledge was sufficient to separate the underlying networks from their mixture. Such a framework can be the basis for constrained matrix or tensor decomposition of dynamic networks or multilayer networks. Finally, we provide a new weighted network completion paradigm that can complement existing matrix completion algorithms.

Other applications of our methodology can be weighted network reconstruction; the field that designs networks from scratch fullfilling specific criteria (e.g. a specific value for transitivity). The design of such network from scratch can be performed by transversing the level sets of a graph measure through the graph measure derivatives. Link prediction can also be incorporated by considering not only the weight similarities between different nodes but also the similarity between their derivatives.
\section*{Acknowledgment}

We would like to thank Dr Mario Parra Rodriguez (Heriot-Watt University) for making the EEGs available to us.

\bibliographystyle{IEEEtran}
\bibliography{../citations}
\begin{IEEEbiography}[{\includegraphics[width=1.in,height=1.25in,clip,keepaspectratio]{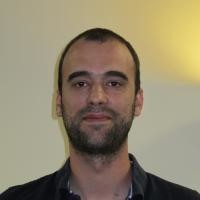}}] 
Loukianos Spyrou is a Research Associate at the Institute for Digital Communications, University of Edinburgh's School of Engineering. He is currently working on the MASNET project with Prof. John Thompson. His primary research interests are in machine learning and signal processing methodologies with focus on biomedical applications. Loukianos received the M.Eng. degrees from the University of York, Department of Electronics in 2004, the M.Sc. degree from the King's College London in 2005. His PhD was awarded in 2009 from Cardiff University. He has been working as a postdoctoral researcher since 2012 with previous posts in Radboud University and in the University of Surrey.
\end{IEEEbiography} 
\begin{IEEEbiography}[{\includegraphics[width=1.25in,height=1.25in,clip,keepaspectratio]{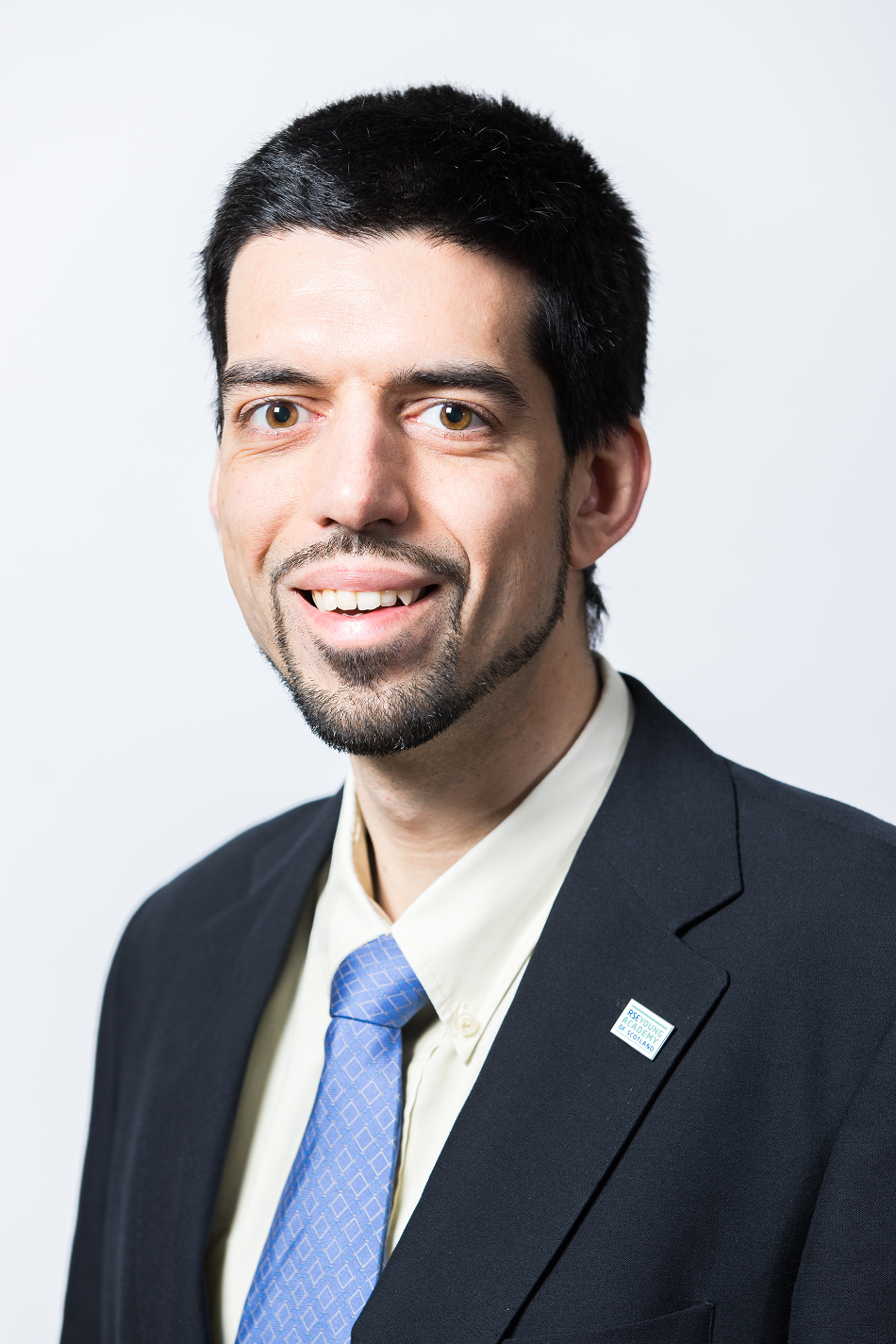}}]
Javier Escudero (S’07–M’10) received the MEng and PhD degrees in telecommunications engineering from the University of Valladolid, Spain, in 2005 and 2010, respectively. Afterwards, he held a post-doctoral position at Plymouth University, UK, until 2013. He is currently a tenured faculty member (Chancellor's Fellow) at the School of Engineering of the
University of Edinburgh, UK, where he leads a research group in biomedical signal processing with particular interests in non-linear analysis, network theory, and multiway decompositions.
He is author of over 45 scientific articles. Dr Escudero received the Third Prize of the EMBS Student Paper Competition in 2007 and the award to the best PhD thesis in healthcare technologies by the Spanish Organization of Telecommunications Engineers in 2010. In 2016, he was elected member of the Young Academy of Scotland. He is president of the society of
Spanish Researchers in the United Kingdom during the 2018/19 term.
\end{IEEEbiography} 
\end{document}


\newpage
\appendices
\section{}\label{app:proof}
\begin{theorem}
	Let $\tb{x},\tb{y},\tb{z} \in \mathbb{R}^n$ and $f(x)=M$ with $f: \mathbb{R}^n \to R $. Suppose $\tb{z}=\tb{x}+\tb{y}$. If the cost function $c(\tb{z})=(f(\tb{z})-M)^2$ is convex, then gradient descent with iteration index $k \in (0...K)$:
	\begin{equation}
	\tb{z}^{k+1}=\tb{z}^k-\lambda \nabla{c(\tb{z}^k)}
	\end{equation} 
	will converge to a point $\tb{z}^K$ such that:
	\begin{equation}
	||\tb{x}-\tb{z}^K||_2 \leq ||\tb{x}-\tb{z}||_2
	\end{equation}
	for any $\tb{z}$ and small enough $\lambda$.
\end{theorem}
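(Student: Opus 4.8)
The plan is to reduce the statement to a classical fact about gradient descent on convex functions: since $\tb{z}=\tb{x}+\tb{y}$ and $f(\tb{x})=M$, the point $\tb{x}$ satisfies $c(\tb{x})=(f(\tb{x})-M)^2=0$, and because $c=(f-M)^2\ge 0$ everywhere, $\tb{x}$ is a \emph{global minimiser} of $c$. Hence it suffices to show that a gradient step with small enough step size never increases the Euclidean distance to a minimiser. First I would record the elementary consequences: $\tb{x}\in\arg\min c$ and, by differentiability of $c$, $\nabla c(\tb{x})=\tb{0}$.

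Next, for an arbitrary iterate $\tb{z}^k$ I would expand the squared distance after one step,
\[
\|\tb{z}^{k+1}-\tb{x}\|_2^2=\|\tb{z}^k-\tb{x}\|_2^2-2\lambda\,\langle\nabla c(\tb{z}^k),\tb{z}^k-\tb{x}\rangle+\lambda^2\|\nabla c(\tb{z}^k)\|_2^2 .
\]
Convexity of $c$ (equivalently, monotonicity of $\nabla c$) together with $\nabla c(\tb{x})=\tb{0}$ gives $\langle\nabla c(\tb{z}^k),\tb{z}^k-\tb{x}\rangle=\langle\nabla c(\tb{z}^k)-\nabla c(\tb{x}),\tb{z}^k-\tb{x}\rangle\ge 0$, so the cross term is $\le 0$. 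To control the quadratic term I would use that $\nabla c$ is $L$-Lipschitz on the relevant bounded domain — which holds for the polynomial-type graph-metric cost functions considered here — and invoke co-coercivity, $\langle\nabla c(\tb{z}^k),\tb{z}^k-\tb{x}\rangle\ge \tfrac1L\|\nabla c(\tb{z}^k)\|_2^2$. Then for every $\lambda\le 2/L$,
\[
-2\lambda\,\langle\nabla c(\tb{z}^k),\tb{z}^k-\tb{x}\rangle+\lambda^2\|\nabla c(\tb{z}^k)\|_2^2\le\Big(\lambda^2-\tfrac{2\lambda}{L}\Big)\|\nabla c(\tb{z}^k)\|_2^2\le 0 ,
\]
so $\|\tb{z}^{k+1}-\tb{x}\|_2\le\|\tb{z}^k-\tb{x}\|_2$ for every $k$. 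Chaining this inequality from $k=0$ with $\tb{z}^0=\tb{z}$ yields $\|\tb{z}^K-\tb{x}\|_2\le\|\tb{z}-\tb{x}\|_2$, which is the assertion; convergence of $\{\tb{z}^k\}$ to a minimiser then follows from the standard convex–smooth gradient-descent analysis, but is not actually needed for the distance bound.

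The main point to be careful about is the meaning of ``small enough $\lambda$'': the argument needs a quantitative bound on the quadratic remainder, i.e. a Lipschitz-gradient (or local-smoothness on a bounded sublevel set) hypothesis on $c$, which I would state explicitly rather than leave implicit. A secondary issue is the truncation of weights to $[0,1]$ in Algorithm~\ref{alg:denoising}: since projection onto a convex set is non-expansive and $\tb{x}$ — whose entries lie in $[0,1]$ — is a fixed point of that projection, inserting the projection after each gradient step preserves $\|\tb{z}^{k+1}-\tb{x}\|_2\le\|\tb{z}^k-\tb{x}\|_2$, so the conclusion carries over verbatim to the projected iteration that is actually run.
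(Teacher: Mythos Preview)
Your proof is correct but takes a different route from the paper. The paper argues the one--step distance decrease \emph{geometrically}: it observes that the sublevel sets $L_s(c)=\{\tb{z}:c(\tb{z})<s\}$ are convex, that $-\nabla c(\tb{z}^k)$ is normal to the level set through $\tb{z}^k$, and hence that for every $\tb{z}^*$ in that sublevel set (in particular for $\tb{x}$, which lies in every sublevel set since $c(\tb{x})=0$) the angle $\angle(\tb{z}^{k+1},\tb{z}^k,\tb{z}^*)$ is acute; for a short enough step the third side of the triangle then shortens, giving $\|\tb{z}^{k+1}-\tb{x}\|<\|\tb{z}^k-\tb{x}\|$. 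You instead take the standard \emph{analytic} route: expand $\|\tb{z}^{k+1}-\tb{x}\|_2^2$, use gradient monotonicity to make the cross term nonpositive, and invoke co-coercivity under an $L$--Lipschitz gradient to absorb the quadratic term. The two arguments share the same core --- your inequality $\langle\nabla c(\tb{z}^k),\tb{z}^k-\tb{x}\rangle\ge 0$ is precisely the analytic form of the paper's ``acute angle'' observation --- but your version is more quantitative: it makes the needed smoothness hypothesis explicit and pins down ``small enough $\lambda$'' as $\lambda\le 2/L$, whereas the paper leaves this implicit in the geometric picture. Your closing remark that the $[0,1]$--truncation is a nonexpansive projection fixing $\tb{x}$, so the projected iteration inherits the distance bound, is a useful addition that the paper's proof does not discuss.
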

\begin{proof} The vectors $\tb{x},\tb{y},\tb{z}$ denote the vectorised versions of the adjacency matrices $\tb{W}$. This theorem describes the situation that a network ($\tb{x}$) is corrupted by noise ($\tb{y}$) resulting in a noisy network ($\tb{z}=\tb{x}+\tb{y}$). The purpose of this proof is to show that gradient descent with convex $c(.)$ will converge to $\tb{z}^K$ which is a better, in terms of distance, estimate of $\tb{x}$ than $\tb{z}$. The proof does not try to show that gradient descent on convex functions achieves a global minimum; instead that the minimum achieved is a better estimate of the network than the original noisy network $\tb{z}$. Note that for graph metrics in general, there are infinite solutions for a matrix $\tb{W}$ that achieves a minumum point. 
	
For strictly convex functions, the proof is trivial since there is a unique minimum which corresponds to $\tb{x}$ only and gradient descent will converge to that value.
	
For both convex and strictly convex functions the proof is as follows. The sublevel sets, $L_{s}(c)=\left\{ \tb{z} \in \mc{R}^n\,\mid \,c(\tb{z})< s\right\}$ of a convex function are convex sets. Furthermore, $L_{a}(c) \subseteq L_{b}(c)$ for any $a\leq b$. For any point $\tb{z}^k$ with $c(\tb{z}^k)=L_{k}$, the negative gradient $-\nabla c(\tb{z}^k)$ forms a right angle with the level set at $L_k$ (by definition). Since the level sets are convex sets, the gradient update $\tb{z}^k-\lambda \nabla{c(\tb{z}^k)}$ leads to a point $\tb{z}^{k+1}$ such that the angle between $\tb{z}^{k+1}\tb{z}^k$ and $\tb{z}^{k}\tb{z}^*$ is acute in the triangle $(\tb{z}^k)(\tb{z}^{k+1})(\tb{z}^*)$ for any $\tb{z}^* \in L_{k}(c)$ (see Figure \ref{fig:proof}) and small enough step size. Since the optimum point $\tb{x}$ is also contained in $L_{k}(c)$ and the relation between the distances $||\tb{z}^*-\tb{z}^{k+1}||<||\tb{z}^*-\tb{z}^k||$, this implies that $||\tb{x}-\tb{z}^{k+1}||<||\tb{x}-\tb{z}^k||$ for any $\tb{z}$ and $\tb{k}$ provided that the step size is small enough.  Small enough in the sense that the gradient step must not cross the level set of $\tb{z}^*$.
\end{proof}

\begin{figure}[htbp]	
	\centering    
	\includegraphics[width=9cm]{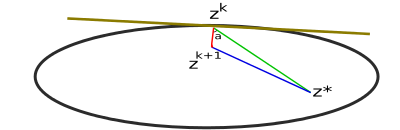}
	\caption{Triangle formed between $z^k$, $z^{k+1}$ and $z^*$. $z^*$ is any point inside the sublevel set at $f(z^k)=K_k$. For such a triangle the distance $||z^*-z^{k+1}||$ is always smaller than $||z^*-z^k||$. The closed curve denotes the boundary of the level set at $k$.}\label{fig:proof}
\end{figure}

\section{}
\subsection{Derivatives of scalar functions of matrices} \label{app:deriv}
\noindent
Differentiating a scalar function $f(\bf W)$ w.r.t. a matrix $\bf W$, $\frac {\partial f}{\partial \mathbf {W} }$, is essentially a collection of derivatives w.r.t. the separate matrix elements placed at the corresponding indices, i.e.: 
\begin{equation}
\frac {\partial f}{\partial \mathbf {W} }=
\begin{bmatrix}{\frac {\partial f}{\partial w_{11}}}&{\frac {\partial f}{\partial w_{12}}}&\cdots &{\frac {\partial f}{\partial w_{1n}}}\\{\frac {\partial f}{\partial w_{21}}}&{\frac {\partial f}{\partial w_{22}}}&\cdots &{\frac {\partial f}{\partial w_{2n}}}\\\vdots &\vdots &\ddots &\vdots \\{\frac {\partial f}{\partial w_{n1}}}&{\frac {\partial f}{\partial w_{n2}}}&\cdots &{\frac {\partial f}{\partial w_{nn}}}\\
\end{bmatrix}
\end{equation}
Expressing the derivatives in matrix form allows easy and scalable formulation of the derivatives irrespective of the number of entries. If the derivative of a specific element of $\bf W$, indexed by $(i,j)$, is required to be processed separately, this can be performed by selecting the same index of the derivative matrix. For example $\{\frac{df}{d\bf W}\}_{ij}=\frac{df}{dw_{ij}}$.

In the case of undirected networks, where the weight matrices are symmetric, the following adjustment needs to be made to ensure that the derivatives are themselves symmetric:
\\
\begin{equation}
\frac{df}{d{\bf W}}=\frac{\partial f}{\partial {\bf W}}+\left(\frac{\partial f}{\partial {\bf W}}\right)^T-diag\left(\frac{\partial f}{\partial {\bf W}}\right)
\end{equation}
The formulations in the text are in terms of the partial derivatives for simplicity. 

\subsection{Modularity derivative} \label{app:mod}
The modularity is written as:

\begin{equation}
M=\frac{1}{l^w}\sum\limits_{ij}\left(w_{ij}-\frac{k_i^w k_j^w}{l^w}\right)\delta_{ij}
\end{equation}
\\
where $\delta_{ij}=1$ whenever nodes $i$ and $j$ belong to the same module and zero otherwise. The term $\frac{1}{l^w}\sum\limits_{ij}(w_{ij}\delta_{ij})$ can be written as:
\\
\begin{equation}
m_1=\frac{1}{l^w}\sum\limits_{ij}(w_{ij}\delta_{ij})=\frac{tr\{\bf{W\Delta^T}\}}{tr\{\bf WO_n\}}=\frac{\theta}{l^w}
\end{equation}
\\
where $\bf{\Delta}$ is the matrix that contains the $\delta_{ij}$. The gradient is given by:
\\
\begin{equation}
\frac{\partial m_1}{\partial \bf W}=\frac{ l^w{\bf\Delta}-{\theta \bf O_n^T} }{(l^w)^2}
\end{equation}
\\
The other term can be written as:
\\
\begin{eqnarray}
m_2=\frac{1}{(l^w)^2}\sum\limits_{ij}k_i^w k_j^w \delta_{ij}=\frac{1}{(l^w)^2}\sum\limits_{ijkl} w_{ik} w_{jl} \delta_{ij}=\\
\frac{1}{(l^w)^2}{\sum\limits_{r=1}^ntr\{\bf{W^T}\bf{C}_r\bf{W} \Delta^T \}}=\frac{1}{(l^w)^2}\sum\limits_{r=1}^n \xi_r
\end{eqnarray}
\\
where $\bf{C_r}$ is a circular shift matrix that shifts down the rows of the matrix on the right by $r-1$. 

\section{} \label{app:conv}
\begin{theorem}
	For any weighted network with a graph measure of the type $f(\tb{W})=tr\{\tb{W}\tb{A}\}$ the function $g(\tb{W})=\left(f(\tb{W})-K\right)^2$ is convex for any  matrix $\tb{A}$. 
\end{theorem}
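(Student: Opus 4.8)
The plan is to exploit the fact that $f(\tb{W}) = tr\{\tb{W}\tb{A}\}$ is a \emph{linear} functional of the entries of $\tb{W}$, so that $g$ is the composition of the one-dimensional convex map $t\mapsto t^2$ with an affine map, and such compositions are always convex. Since convexity of a graph-metric cost function is exactly the hypothesis under which Appendix \ref{app:proof} guarantees error reduction, this is the statement needed to make the degree-based denoising claim rigorous.

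First I would vectorise: writing $\tb{w}\in\mathbb{R}^{n^2}$ for the stacked columns of $\tb{W}$ and $\tb{a}\in\mathbb{R}^{n^2}$ for the stacked columns of $\tb{A}^T$, the identity $tr\{\tb{W}\tb{A}\}=\sum_{ij}w_{ij}a_{ji}$ gives $f(\tb{W})=\tb{a}^T\tb{w}$. Hence $f(\tb{W})-K=\tb{a}^T\tb{w}-K$ is affine in $\tb{w}$, and $g(\tb{W})=(\tb{a}^T\tb{w}-K)^2$. Then I would conclude convexity in either of two equivalent ways. A direct Hessian computation gives $\nabla g(\tb{w})=2(\tb{a}^T\tb{w}-K)\tb{a}$ and $\nabla^2 g(\tb{w})=2\tb{a}\tb{a}^T\succeq 0$ for every $\tb{w}$ and every $\tb{A}$, so $g$ is convex on all of $\mathbb{R}^{n^2}$. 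Alternatively, by the composition rule, $h(t)=t^2$ is convex and $\ell(\tb{w})=\tb{a}^T\tb{w}-K$ is affine, so $g=h\circ\ell$ is convex. Restricting the domain to the symmetric matrices, or to the box $[0,1]^{n^2}$ used in the algorithms, only restricts to a convex subset of the domain, which preserves convexity; and replacing the partial derivative by the symmetrised derivative of Appendix \ref{app:deriv} leaves this conclusion untouched.

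There is essentially no obstacle here — the only point that needs care is checking that $f$ is genuinely \emph{linear} (not merely smooth) in $\tb{W}$, which is immediate from the linearity of the trace. I would also add the remark that the same computation shows $g$ is \emph{not} strictly convex, since $\nabla^2 g=2\tb{a}\tb{a}^T$ has rank one; this is consistent with the observation in Appendix \ref{app:proof} that the minimiser over $\tb{W}$ is far from unique, and it is exactly the (non-strict) convex setting treated there. Concretely, the average degree $\frac1n\sum_i k_i^w=\frac1n\,tr\{\tb{W}\tb{O}_n\}$ is precisely of the form $tr\{\tb{W}\tb{A}\}$ with $\tb{A}=\frac1n\tb{O}_n$, so Theorem~2 applied to this $\tb{A}$ is what licenses the degree-denoising guarantee stated in Section \ref{sec:denoising}.
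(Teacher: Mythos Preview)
Your argument is correct. Both you and the paper arrive at the same Hessian, $2\,\mathrm{vec}(\tb{A}^T)\mathrm{vec}(\tb{A}^T)^T$, and conclude convexity from its rank-one positive-semidefinite structure; you also note the non-strictness, matching the paper's remark. The difference is purely in machinery: the paper computes the second differential $d^2g$ in the Magnus--Neudecker formalism, invoking the commutation matrix $\tb{K}_{nn}$ to pass from $\mathrm{vec}(d\tb{W}^T)$ to $\mathrm{vec}(d\tb{W})$ before reading off the Hessian, whereas you vectorise at the outset so that $g(\tb{w})=(\tb{a}^T\tb{w}-K)^2$ and the Hessian $2\tb{a}\tb{a}^T$ is immediate. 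Your alternative composition argument ($t\mapsto t^2$ convex composed with the affine $\tb{w}\mapsto \tb{a}^T\tb{w}-K$) is more elementary still and sidesteps the Hessian entirely; it is not in the paper but is arguably the cleanest route.
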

\begin{proof}
In order to show that $g(\tb{W})$ is convex it suffices to show that the Hessian of $g$ is positive semidefinite. We will use differential notation \cite{Magnus1989} to calculate the Hessian which is defined as:
	\begin{equation}
	 \{\tb{H}g\}_{ij} \equiv \frac{\partial^{2} g}{\partial x_{i} \partial x_{j} } 
	\end{equation}
where $x_i$ is an element of the vectorized weight matrix $\tb{W}$. The first differential of $g(\tb{W})$ is:
\begin{equation}
dg(\tb{W})=2\left(tr\{\tb{WA}\}-K\right)tr\{d\tb{W}\tb{A}\}
\end{equation}
The second differential is:
\begin{equation}
d^2g(\tb{W})=2tr\{d\tb{W}\tb{A}\}tr\{d\tb{W}\tb{A}\}
\end{equation}
Using the relation between the trace and the vec operator, i.e. $tr\{\tb{A}^T\tb{B}\}=vec(\tb{A})^Tvec(\tb{B})$ and the circular property of the trace, i.e. $tr\{d\tb{W}\tb{A}\}=tr\{\tb{A}d\tb{W}\}$:
\begin{eqnarray}
d^2g(\tb{W})=2vec(\tb{dW}^T)^T vec(\tb{A}) vec(\tb{A}^T)^T vec(\tb{dW})\\
=2vec(\tb{dW})^T \tb{K}_{nn}vec(\tb{A}) vec(\tb{A}^T)^T vec(\tb{dW})\\
=vec(\tb{dW})^T 2vec(\tb{A}^T) vec(\tb{A}^T)^T vec(\tb{dW})
\end{eqnarray}
where $\tb{K}_{nn}$ is the commutation matrix satisfying $vec(\tb{X}^T)=\tb{K}_{nn} vec(\tb{X})$. Note that we have `commuted' $\tb{K}_{nn}$ from $vec(\tb{dW}^T)$ to $vec(\tb{A}^T)$. The second differential was brought to the form $d^2 g = vec(d\tb{W})^T \tb{Z} vec(d\tb{W})$ which means that the Hessian is \cite{Magnus1989}:
\begin{equation}
\tb{H}g = \frac{1}{2} (\tb{Z}+\tb{Z}^T)
\end{equation}
The matrix $\tb{Z}$ is of the type $\tb{Z}=\tb{c}\tb{c}^T$. Since $\tb{c}$ is a vector the product $\tb{c}\tb{c}^T$ is rank-one producing a single nonzero positive eigenvalue. Therefore $\tb{Z}$ is positive semi-definite and hence the Hessian is positive semi-definite.
\end{proof}